\newtheorem{theorem}{Theorem}[section]
\newtheorem{lemma}[theorem]{Lemma}
\newcommand{\mmath}[1]{\relax\ifmmode#1\relax\else$#1$\fi}
\newcommand{\unmath}[1]{\relax\ifmmode\hbox{#1}\else#1\fi}
\let\ifnextchar=\@ifnextchar
\def\getprime#1{\let\oldpr\primes \edef\primes{\oldpr#1}%
\ifnextchar'{\getprime}{\primes$}}
\newcommand{\remove}[1]{\relax}
\newcommand{\defeq}{\stackrel{\text{def}}{=}}
\newcommand{\len}[1]{\ensuremath{\|#1\|}}
\newcommand{\VV}{\ensuremath{{\cal V}}}
\newcommand{\GG}{\ensuremath{{\cal G}}}
\newcommand{\GGt}{\ensuremath{\GG(\theta)}}
\newcommand{\GGte}{\ensuremath{\GG(\theta+\varepsilon)}}
\renewcommand{\SS}{\ensuremath{{\cal S}}}
\newcommand{\QQ}{\ensuremath{{\cal Q}}}
\renewcommand{\P}{\ensuremath{\mathbf{P}}}
\newcommand{\LL}[1]{{\ensuremath{\mathbf{L}(#1)}}}
\newcommand{\RR}[1]{{\ensuremath{\mathbf{R}(#1)}}}
\newcommand{\PL}{{\ensuremath{\mathbf{P}_{\!\sf L}}}}
\newcommand{\PR}{{\ensuremath{\mathbf{P}_{\!\sf R}}}}
\newcommand{\HH}[1]{\ensuremath{\mathbf{H}_{#1}}}
\newcommand{\HHt}[1]{\ensuremath{\mathbf{H}_{#1}(\theta)}}
\newcommand{\T}{\ensuremath{T}}
\newcommand{\Tt}{\ensuremath{T(\theta)}}
\newcommand{\Tc}{\ensuremath{T_{{\rm opt}}}}
\newcommand{\Tz}{\ensuremath{T(0^{\circ})}}
\newcommand{\Tte}{\ensuremath{T(\theta+\varepsilon)}}
\newcommand{\SP}[1]{\ensuremath{S(#1)}}
\newcommand{\e}{\ensuremath{e}}
\newcommand{\ex}{\ensuremath{\bar{e}}}
\renewcommand{\c}{\ensuremath{c}}
\newcommand{\lx}{\ensuremath{{l}}}
\newcommand{\s}{\ensuremath{s}}
\newcommand{\cf}{\ensuremath{c_{\sf f}}}
\newcommand{\cb}{\ensuremath{c_{\sf b}}}
\newcommand{\g}{\ensuremath{g}}
\newcommand{\gt}{\ensuremath{g}}
\newcommand{\gte}{\ensuremath{g_{\varepsilon}}}
\newcommand{\gpt}{\ensuremath{g'}}
\newcommand{\gppt}{\ensuremath{g''}}
\newcommand{\Ct}{\ensuremath{C(\theta)}}
\newcommand{\Cte}{\ensuremath{C(\theta+\varepsilon)}}
\newcommand{\p}{\ensuremath{p}}
\newcommand{\q}{\ensuremath{q}}
\renewcommand{\v}{\ensuremath{v}}
\newcommand{\vg}{\ensuremath{\v_{\g}}}
\newcommand{\vgt}{\ensuremath{\v_{\gt}}}
\newcommand{\vgpt}{\ensuremath{\v_{\gpt}}}
\newcommand{\vgppt}{\ensuremath{\v_{\gppt}}}
\newcommand{\upt}{\ensuremath{\u'}}
\newcommand{\upte}{\ensuremath{\u'_{\varepsilon}}}
\renewcommand{\u}{\ensuremath{u}}
\newcommand\blfootnote[1]{%
  \begingroup
  \renewcommand\thefootnote{}\footnote{#1}%
  \addtocounter{footnote}{-1}%
  \endgroup}
\title{Shortest Watchman Tours in Simple Polygons under Rotated Monotone Visibility}
\date{}
\author[1]{Bengt J.~Nilsson\thanks{Email: bengt.nilsson.TS@mau.se}}
\author[2]{David Orden \thanks{Email: david.orden@uah.es}}
\author[3]{Leonidas Palios\thanks{Email: palios@cs.uoi.gr}}
\author[4]{Carlos Seara\thanks{Email: carlos.seara@upc.edu}}
\author[5]{Pawe\l{} \.Zyli\'nski\thanks{Email: pawel.zylinski@ug.edu.pl}}
\affil[1]{Malm\"o University, Sweden.}
\affil[2]{Universidad de Alcal\'{a}, Spain.}
\affil[3]{University of Ioannina, Greece.}
\affil[4]{Universitat Polit\`{e}cnica de Catalunya, Spain.}
\affil[5]{University of Gda\'nsk, Poland.}
\begin{document}
  \maketitle
 \blfootnote{
 	\begin{minipage}[l]{0.3\textwidth}
 		\includegraphics[trim=10cm 6cm 10cm 5cm,clip,scale=0.12]{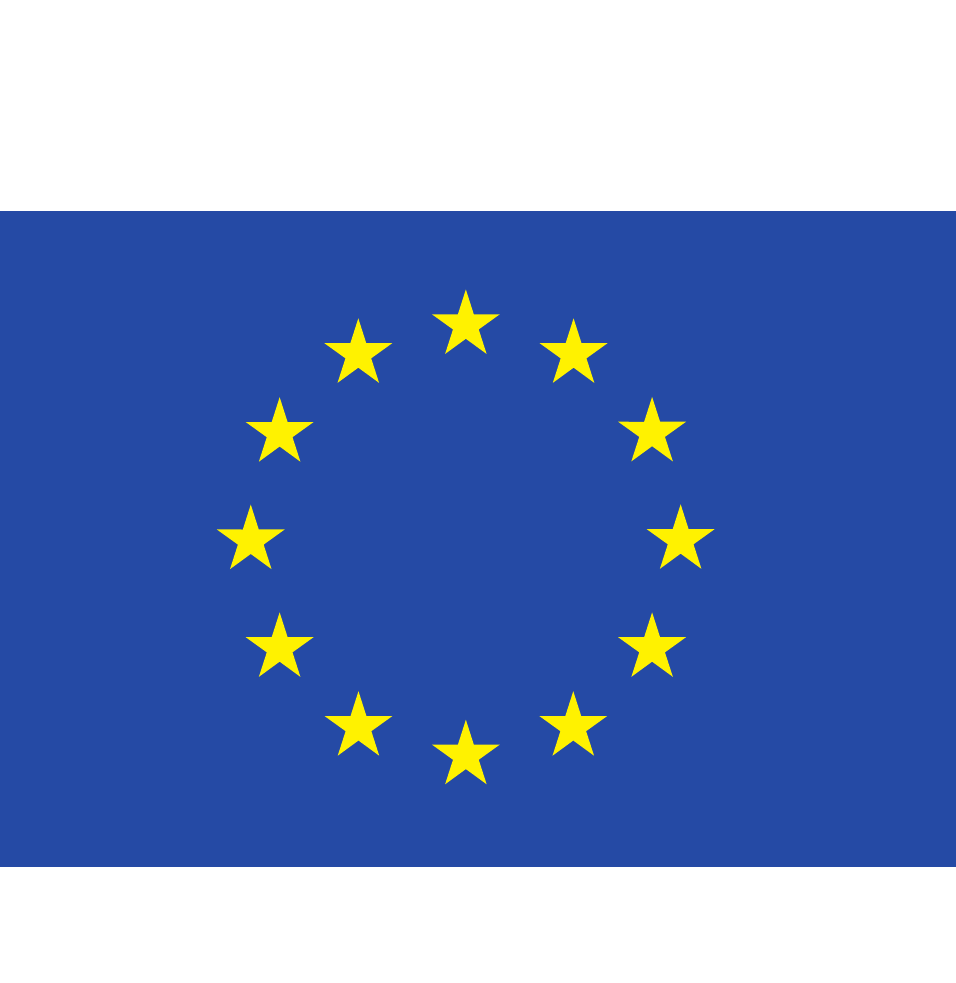}
 	\end{minipage}
 	\hspace{-3cm}
 	\begin{minipage}[l][1cm]{0.75\textwidth}
       	This work has received funding from the European Union's Horizon 2020
       	research and innovation programme under the Marie Sk\l{}odowska-Curie
       	grant agreement No 734922.
      \end{minipage}
 }
 \vspace{-0.9cm}

\vspace{2ex}
\begin{abstract}
\noindent We present an $O(nrG)$ time algorithm for computing and maintaining a minimum length shortest watchman tour that sees a simple polygon under monotone visibility in direction $\theta$, while $\theta$  varies in $[0,180^{\circ})$, obtaining the directions for the tour to be the shortest one over all tours, where $n$ is the number of vertices, $r$ is the number of reflex vertices, and $G\leq r$ is the maximum number of gates of the polygon used at any time in the algorithm.

\medskip
\noindent\textbf{Keywords:}
Watchman tour, simple polygons,  monotone, rotating
\end{abstract}

\section{Introduction}\label{sec:intro}

Arguably, problems concerning visibility and motion planning in polygonal environments are among the most well-studied in computational geometry. A problem that encompasses both the visibility and motion planing aspects is that of computing a~{\em shortest watchman tour\/} in an environment, i.e., the shortest closed tour that sees the complete free space of the environment. This problem, under the standard visibility model where two points see each other if the line segment connecting them does not intersect the exterior of the environment free space, has been shown NP-hard~\cite{ChiNta:rectwatchman,DumTot:holewatchman} and even $\Omega(\log n)$-inapproximable~\cite{Mit:approxwatchman} for polygons with holes having a~total of $n$ segments.
With respect to simple polygons, Chin and Ntafos~\cite{ChiNta:rectwatchman} showed a linear time algorithm to compute a shortest watchman tour in a simple rectilinear polygon. Then, after a few false starts~\cite{ChiNta:watchman,HamNil-SWR:conf,TanHir:watchman,TanHirIna:watchman}, Tan~{\em et~al}.~\cite{TanHirIna:corrwatchman} proved an $O(n^4)$ time dynamic programming algorithm for computing a shortest watchman tour through a given boundary point in an arbitrary simple polygon, the so-called {\em fixed\/} watchman tour. This was later improved by Dror~{\em et al}.~\cite{DroEfrLubMit:watchman} to  $O(n^3\log n)$ time. Carlsson~{\em et~al}.~\cite{CarJonNil:journal} showed how to generalize algorithms for a shortest fixed watchman tour to compute a shortest watchman tour in a simple polygon without any pre-specified point to pass through, a {\em floating\/} watchman tour, using quadratic factor overhead. Tan~\cite{Tan:watchman} improved this to a linear factor overhead, thus establishing an $O(n^4\log n)$ time for the floating case, if combined with the algorithm by Dror~{\em et al}.~\cite{DroEfrLubMit:watchman}.

Imagine now that a navigating robot is capable only to see along paths restricted to some (predefined) orientations, with respect to the coordinate system of its free space map. Since changing this coordinate system may affect an optimal trajectory of the robot, a natural variation on the aforementioned watchman tour problem is to allow, as an interrelated optimization process, rotation of the coordinate system (equiv.\ the input polygon), in order to obtain a better solution. Whereas under standard visibility, while rotating the coordinate system, the length of its shortest watchman tour is an invariant, under a non-standard one, it may not be the case,
and moreover, the lengths of such tours may vary arbitrarily. Therefore in this paper, inspired also by some already existing work on the issue of optimizing with rotation~\cite{AleOrPaSeUr:capturingpoints,AleOrSeUr:convexhull,OrPaSeUrZy:widestcorridor}, we introduce the problem of computing the orientation $\theta$ that minimizes the length of a watchmen tour taken over all rotations of the coordinate system, under monotone visibility~\cite{AsaGhoShe:inhandbook}. 
In particular, we present an $O(nrG)$ time algorithm for computing and maintaining a shortest (floating) watchman tour that sees a simple polygon under monotone visibility in direction $\theta$, while $\theta$  varies in $[0,180^{\circ})$, obtaining the directions for the tour to be the shortest one over all tours, where $n$ is the number of vertices, $r$ is the number of reflex vertices, and $G$ is the maximum number of gates of the polygon used at any time in the algorithm; see Section~\ref{sec:prel} for a formal definition. In particular, we have $G\leq r$ in all cases.



Visibility plays a central role in diverse advanced application areas, for example, in surveillance, computer graphics, sensor placement, and motion planning as well as in wireless communication. Therefore, our particular monotone visibility model has practical applications in material
processing and manufacturing. We also note in passing that the problem of computing the shortest watchman tour for a given polygon, under rotated monotone visibility, is related to a variety of problems concerning the concept of ``oriented'' kernels in polygons, which has already attracted attention in the literature. In particular, for a given set ${\cal O}$ of predefined directions, Schuierer et al.~\cite{SRW} provided an~algorithm to compute the $\mathcal{O}$-Kernel of a simple polygon. Next,  Schuierer and Wood~\cite{SW} introduced the concept of  the {\em external} $\mathcal{O}$-Kernel of a polygon, in order to compute the $\mathcal{O}$-Kernel of a simple polygon with holes. In addition, when restricted to ${\cal O}=\{0^\circ,90^\circ\}$, Gewali~\cite{G} described a linear algorithm for orthogonal polygons without holes, and a quadratic one for orthogonal polygons with holes, whereas Palios~\cite{P} gave an output-sensitive algorithm for that problem in orthogonal polygons with  holes. More recently, Orden et al.~\cite{EuroCG18} presented algorithms for computing the orientations $\theta$ in $[-90^\circ,90^\circ)$ such that the $\{\theta\}$-Kernel of a simple (or orthogonal) polygon is not empty, has maximum/minimum area or maximum/minimum perimeter.

Our work is organized as follows. Section~\ref{sec:prel} contains preliminary results and an overview of the linear-time algorithm for computing a  shortest watchman tour in rectilinear polygons, being the basis for our approach. In Section~\ref{sec:algorithm}, we present our algorithm and prove its correctness. Finally, in Section~\ref{sec:analysis}, we analyze the running time of the algorithm and conclude the presentation in Section~\ref{sec:conc}.
\section{Preliminaries}\label{sec:prel}
Let $\theta$ be a direction, specified using its angle to the $x$-axis, and let \P\ be a simple polygon having $n$ edges. We define a path $\Pi$ inside \P\ to be {\em monotone w.r.t.\@ direction $\theta$\/} or {\em $\theta$-monotone\/} if and only if the intersection between any line parallel to direction $\theta$ and $\Pi$ is a~connected set. In standard visibility, two points \p\ and \q\ in \P\ see each other if and only if the line segment between \p\ and \q\ does not intersect the exterior of~\P. In $\theta$-monotone visibility, two points \p\ and \q\; {\em $\theta$-see} each other if and only if there is a $\theta$-monotone path between \p\ and \q\ not intersecting the exterior of~\P.

Let \v\ be a reflex vertex of \P\ incident to the boundary edge \e. If we extend~\e\ maximally inside \P, we obtain a line segment \ex\ collinear with \e, and we can associate the direction to \ex\ being the same as that of \e\ as we traverse the boundary of \P\ in counterclockwise order; see Figure~\ref{fig:Tmustcross}(a). Therefore, any reflex vertex is adjacent to two extensions in~\P. Any directed segment in \P\ intersecting the interior of \P\ and connecting two boundary points of \P\ is called a {\em cut\/} of \P. Thus, an extension \ex\ is a cut in \P. A cut \c\ partitions \P\ into two components, \LL{\c}, the component locally to the left of \c\ according to its direction, and \RR{\c}, the component locally to the right of \c\ according to its direction. Consider now a closed tour \T\ inside \P. Chin and Ntafos~\cite{ChiNta:rectwatchman,ChiNta:watchman} argue for standard visibility that in order for \T\ to see the whole polygon \P, it is sufficient for \T\ to see the vertices of \P, and therefore, it is sufficient for \T\ to have a point in~\LL{\ex} for every extension \ex\ in~\P, i.e., to intersect the left of any extension in~\P; again see Figure~\ref{fig:Tmustcross}(a).

\begin{figure}
\begin{center}
\includegraphics[width=5in]{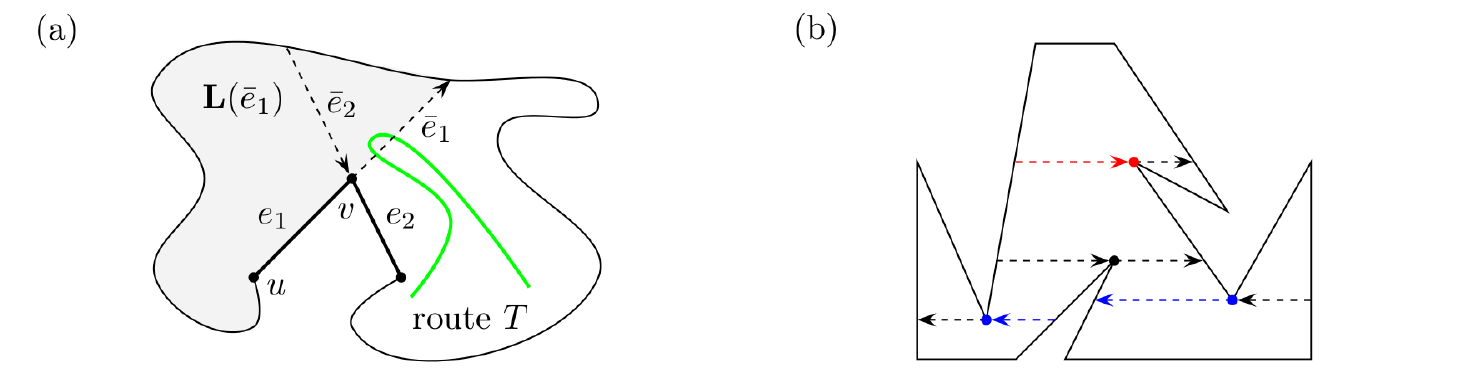}
\caption{(a) $T$ must have a point in $\protect\LL{\bar{e}_1}$ (marked gray) to see $u$. (b) Dominant extensions (gates) are marked with blue and red.}\label{fig:Tmustcross}
\end{center}
\end{figure}

Next, for monotone visibility in direction $\theta$, we introduce the following definitions. Consider a reflex vertex \v\ of \P\ and let \lx\ be the directed line with angle $\theta$ to the $x$-axis passing through \v. If the two boundary edges incident to \v\ lie on the same side of \lx, let \s\ be the maximal segment collinear to \lx\ inside \P\ that passes through \v. The segment \s\ partitions~\P\ into three components incident to \v. Two components \PL\ and~\PR\ have \v\ as a convex vertex and the third component has \s\ as a boundary edge; \PR\ is the first subpolygon traversed by the counterclockwise traversal of~\P\ starting at \v, whereas \PL\ is the other component having~$v$ as convex vertex. The problem thus reduces to obtaining the shortest tour that intersect the left of a given set of cuts.

Now, we can argue similarly for monotone visibility in direction $\theta$ as in the standard visibility case. Assume the tour \T\ has points in \PR. Unless~\T\ intersects~\s, \T~cannot see any points in \PL. Similarly, if \T\ has points in \PL, then \T\ cannot see any points in \PR, unless it intersects \s. To mimic the standard visibility situation, we introduce two cuts \cf\ and \cb\ incident to \v, where \cf\ is the portion of \s\ bounding \PR, and \cb\ is the portion of \s\ bounding \PL. Specifically, the cut~\cf\ is directed away from \v\ and we call it a {\em forward $\theta$-cut\/},  and symmetrically, the cut \cb\ is directed towards~\v\ and we call it a {\em backward $\theta$-cut}; see Figure~\ref{fig:Tmustcross}(b).

We also color the $\theta$-cuts and their associated reflex vertices. A $\theta$-cut is {\em red}, if the boundary edges incident to the associated vertex \v\ both lie locally to the right of the directed line \lx\ defined above. The vertex \v\ is thus called a~{\em red\/} vertex. Analogously, a $\theta$-cut is {\em blue}, if the boundary edges incident to the associated vertex \v\ both lie locally to the left of the directed line \lx, and \v\ is called a {\em blue\/} vertex. Other reflex vertices are not colored as they do not break monotonicity w.r.t.\@ direction $\theta$ and are therefore not used; see Figure~\ref{fig:Tmustcross}(b).

Similarly to the standard visibility case, we define the region \LL{\c} for a $\theta$-cut \c\ to be the part of the polygon \P\ locally to the left of \c\ according to its direction, and \RR{\c} to be the part of the polygon \P\ locally to the right of \c. We claim the following lemma that corresponds to the standard visibility case~\cite{ChiNta:watchman}; its correctness follows from the definition of the $\theta$-cut.
\begin{lemma}
A tour \T\ in \P\ is a watchman tour under $\theta$-monotone visibility if and only if it intersects the region \LL{\c}, for every $\theta$-cut~\c\ in~\P.
\end{lemma}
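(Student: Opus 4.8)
The plan is to prove both directions of the equivalence, reducing the statement to the visibility characterization already established for the standard case and invoking the geometric meaning of the $\theta$-cuts as defined above.

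For the forward direction, suppose $\T$ is a watchman tour under $\theta$-monotone visibility, and let $\c$ be an arbitrary $\theta$-cut in $\P$. By construction, $\c$ is the portion of the maximal segment $\s$ through some reflex vertex $\v$ bounding one of the subpolygons $\PL$ or $\PR$. The key observation already recorded in the preliminaries is that if $\T$ lies entirely in $\RR{\c}$, then $\T$ cannot $\theta$-see any point strictly inside the opposite component unless $\T$ meets $\s$; concretely, a $\theta$-monotone path from a point of $\RR{\c}$ to a point on the far side of $\v$ would have to cross $\s$, and by monotonicity it cannot re-enter, so visibility of those points forces $\T$ to touch $\LL{\c}$. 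Since a watchman tour must $\theta$-see every point of $\P$, and in particular points arbitrarily close to $\v$ on the $\PL$ (resp.\ $\PR$) side of the cut, $\T$ must intersect $\LL{\c}$. As $\c$ was arbitrary, this holds for every $\theta$-cut.

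For the converse, suppose $\T$ intersects $\LL{\c}$ for every $\theta$-cut $\c$. I would argue exactly as in the standard visibility reduction of Chin and Ntafos: it suffices to show that $\T$ $\theta$-sees every reflex vertex $\v$, since the convex vertices and the rest of the boundary are then seen as well, and seeing the whole boundary under $\theta$-monotone visibility implies seeing the whole interior (a point interior to $\P$ is $\theta$-seen once the portion of boundary cutting off its monotone sightlines is seen). The only obstructions to $\theta$-visibility are precisely the reflex vertices whose incident edges lie on the same side of $\lx$, i.e.\ the red and blue vertices that generate the $\theta$-cuts; the remaining reflex vertices do not break $\theta$-monotonicity and impose no constraint. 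For each such obstruction at $\v$, having a point of $\T$ in $\LL{\cf}$ and in $\LL{\cb}$ guarantees that $\T$ reaches both sides separated by $\s$, so a $\theta$-monotone sightline from $\T$ to $\v$ exists. Hence $\T$ $\theta$-sees $\v$, and therefore all of $\P$.

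The main obstacle I expect is the careful verification of the visibility-to-boundary-to-interior reduction under the \emph{monotone} model rather than the standard one: one must check that $\theta$-seeing all vertices (equivalently, clearing every $\theta$-cut obstruction) really does yield $\theta$-visibility of the entire free space, since a $\theta$-monotone path is more restrictive than a straight segment and the argument of Chin and Ntafos does not transfer verbatim. The cleanest way to handle this is to phrase the obstruction analysis purely in terms of the components $\PL$, $\PR$, and $\s$ as set up in the preliminaries, so that each $\theta$-cut exactly captures one local monotonicity barrier, and the statement that ``$\T$ meets $\LL{\c}$ for all cuts'' becomes synonymous with ``$\T$ has overcome every barrier,'' which by the stated construction is synonymous with being a $\theta$-monotone watchman tour. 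Indeed, as the paper notes, the lemma's correctness then follows directly from the definition of the $\theta$-cut.
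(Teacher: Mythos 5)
Your proposal is correct and follows essentially the same route as the paper, which offers no explicit proof beyond the remark that the lemma ``follows from the definition of the $\theta$-cut'': both directions of your argument are a faithful expansion of the paper's own preliminary discussion of how the segment \s\ separates \PL\ from \PR\ and forces any $\theta$-monotone watchman tour to meet \LL{\cf} and \LL{\cb}. The one point you rightly flag as delicate --- that clearing every colored-vertex barrier suffices for $\theta$-seeing the whole polygon --- is exactly the step the paper also leaves implicit, so your treatment is, if anything, more complete than the original.
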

This lemma allows us to use the algorithm of Chin and Ntafos~\cite{ChiNta:rectwatchman} for computing the shortest watchman tour under $\theta$-monotone visibility, since it computes the shortest tour that intersect the left of a set of cuts. The algorithm works roughly as follows. First, it identifies the proper set of cuts inside the polygon. Second, it reduces the shortest tour problem to a shortest path problem in a triangulated two-manifold, computes the shortest path, and transforms the path to a tour in the original polygon. For the classical watchman tour in a simple polygon, the first step can be done with a ray shooting data structure~\cite{GuiHerLevShaTar:visandshortpath,HerSur:rayshoot} in total $O(n\log n)$ time. In our case, since all $\theta$-cuts are parallel line segments, we can do the first step in $O(n)$ time using the algorithm by Chazelle~\cite{Cha:triangulation} that partitions a simple polygon into $O(n)$ visibility trapezoids by introducing parallel line segments at reflex vertices in the polygon.

Similarly as for standard visibility, we define a {\em dominance relation\/} between cuts. Given two $\theta$-cuts \c\ and $\c'$, we say that \c\ {\em dominates\/} $\c'$ if $\LL{\c}\subset\LL{\c'}$. We call \c\ a~{\em dominating\/} $\theta$-cut or {\em gate}, if \c\ is not dominated by any other $\theta$-cut in \P\!. We refer to the issuing reflex vertex of a gate as the {\em gate vertex\/}, whereas the edge touched by the other endpoint of the gate is called the {\em gate edge}. Carlsson~{\em et al.}~\cite{CarJonNil:journal} show how to compute the dominating cuts in \P\ with the standard visibility in $O(n)$ time, given the complete set of cuts ordered along the boundary. Their method transfers directly to the case of $\theta$-monotone visibility, since given the trapezoidation of the polygon, the ordering can be obtained by a~traversal of the boundary, ordering the forward $\theta$-cuts and the backward $\theta$-cuts separately, and then merging these two sets of $\theta$-cuts. The process thus takes $O(n)$ time in total.

In the algorithm, we establish {\em gates\/} as explained above and remove the portions of the polygon that lie locally to the left of them, resulting in the polygon~$\P'(\theta)$; see Figure~\ref{fig:unrollingexample}(a). The optimal tour will only reflect on the gates in~\P\ to see everything on the other side of them, so it is completely contained in~$\P'(\theta)$. We then triangulate $\P'(\theta)$ and establish a constant-size subset~\VV\ of vertices such that the optimal tour must pass through at least one of them; see Section~\ref{sec:analysis}. In the next step, we compute for each vertex \v\ in \VV, a triangulated two-manifold~$\HH{\v}(\theta)$ (see Section~\ref{subsec:2man}) such that the shortest path from \v\ to its image $\v'$ in~$\HH{\v}(\theta)$ corresponds to the shortest watchman tour in \P\ that passes through \v. We then establish the shortest path \SP{\v,\v'}\ in~$\HH{\v}(\theta)$, for each \v\ in \VV, pick the shortest of these paths, and finally transform it back to the polygon $\P'(\theta)$. The whole computation can be done in $O(n)$ time
~\cite{ChiNta:rectwatchman}.
\subsection{The Two-Manifold}\label{subsec:2man}
In our approach, we extensively exploit the concept of the two-manifold used in the algorithm by Chin and Ntafos~\cite{ChiNta:rectwatchman}. We therefore provide some more details.

Without loss of generality assume that $\theta=0^\circ$. Consider the dominant extensions (or gates in our case) in \P. Chin and Ntafos~\cite{ChiNta:rectwatchman,ChiNta:watchman} prove that an optimal watchman tour will never intersect the interior of any region \LL{\c} for any dominant extension~\c. Thus, we define $\P'={ \P'(0^\circ)}\defeq\big(\P\setminus\bigcup_{\c\in\GG}\LL{\c}\big)^*$, where \GG\ is the set of dominant extensions and $\SS^*$ denotes the closure of a set \SS. (We take the closure to include the boundary points of $\P'$\!.\,) The dominant extensions (gates) of \P\ are now part of the boundary of $\P'$\!, and so we refer to them as the {\em essential edges} of $\P'$\!. The polygon $\P'$ is triangulated and, given a vertex~\v, $\P'$ is then {\em unrolled\/} from \v\ to $\v'$ using the essential edges as mirrors giving \HH{\v} \big($=\HH{\v}(0^\circ)$\big), where $\v'$ is the image of~\v, as follows. The counterclockwise traversal of the boundary of $\P'$ starting at \v\ encounters the incident triangles from the triangulation in order along the edges and vertices of the traversal. When reaching an essential edge, we reflect all subsequent triangles using the essential edge as mirror. As the traversal continues, we repeat this step until the traversal reaches the vertex \v\ again. In the two-manifold, the second instance of the vertex \v\ is called the {\em image\/} of \v\ and denoted $\v'$.  Between the vertex \v\ and the first essential edge, subsequent consecutive pairs of essential edges, and between the final essential edge and $\v'$, we perform a standard breadth-first-search to keep those triangles of the triangulation that form a path between the mirroring segments in \HH{\v} (see Figure~\ref{fig:unrollingexample}), as this will aid the shortest path finding algorithm in later steps~\cite{GuiHerLevShaTar:visandshortpath,LeePre:shortpath}. The size of \HH{\v}\ is linear, because each triangle from the triangulation of $\P'$ is used at most six times in the construction of \HH{\v}, once for each side of the triangle and once for each vertex of the triangle, since each boundary edge and vertex is passed only once as we perform the traversal of~$\P'$\!.

From Heron~\cite{Her:catoptrica}, we know that the shortest path between two points that also touches a line (with both points on the same side of the line) makes perfect reflection on the line; see Figure~\ref{fig:unrollingexample}(b,\,c). Thus, the shortest path in \HH{\v}\ between \v\ and its image $\v'$ corresponds exactly to the shortest watchman tour in \P\ that passes through \v\ as the path is folded back along the gates in \P. Computing the shortest path in a triangulated simple polygon can be done in linear time~\cite{GuiHerLevShaTar:visandshortpath,LeePre:shortpath}, and since \HH{\v}\ consists of a linear number of connected triangles, computing the path takes $O(n)$ time. Folding back the path to obtain the tour also takes $O(n)$ time by traversing the path and computing the points of intersection between the path and the triangle sides corresponding to gates in~\P.
\begin{figure}
\begin{center}
\includegraphics[width=5.75in]{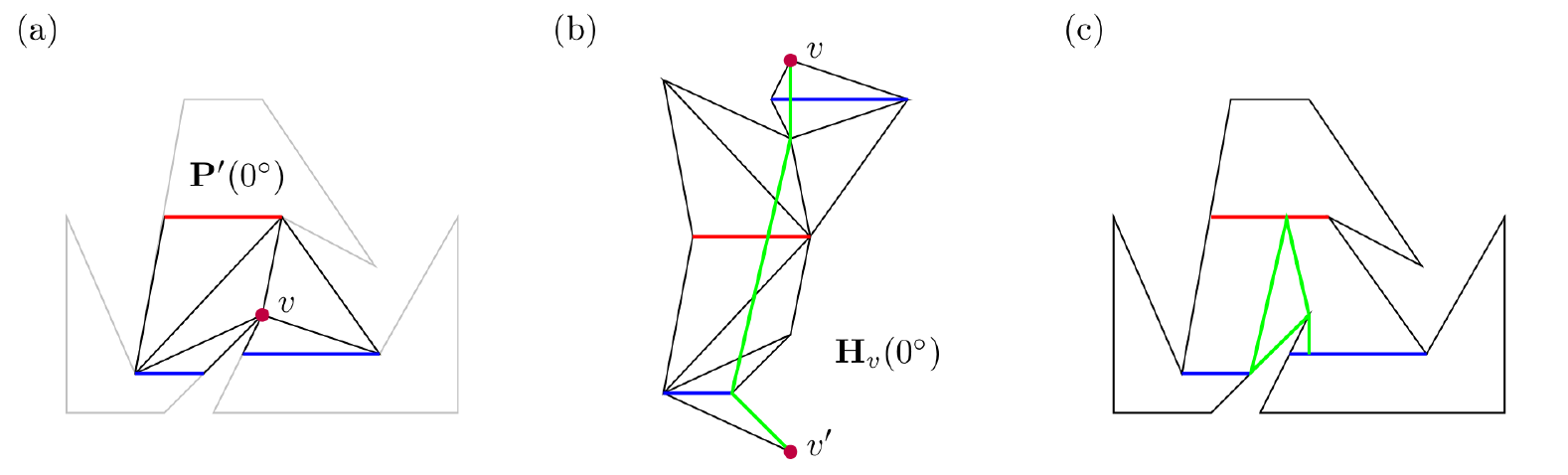}
\caption{(a,\,b) The polygon $\P'(0^\circ)$ with the distinguished vertex $v$ that acts as starting point for the unrolling process resulting in the two-manifold $\HH{\v}(0^\circ)$.
(c)~The resulting shortest watchman tour in~$\protect\P$.}\label{fig:unrollingexample}
\end{center}
\end{figure}
\section{The Algorithm}\label{sec:algorithm}
Given that we can compute the shortest watchman tour under monotone visibility in a specific direction $\theta$ in linear time, our objective is to find the direction~$\theta$ for which the length of a shortest watchman tour under monotone visibility in this direction is minimal. Let \Tt\ denote a shortest watchman tour under monotone visibility in direction $\theta$. The idea of the algorithm is to compute the tour \Tz, and then rotate the direction $\theta$ from $0^{\circ}$ to $180^{\circ}$, updating \Tt\ as the rotation proceeds.

Consider the tour \Tt\ for a fixed direction $\theta$ and let \GGt\ be the set of gates visited by~\Tt. Treat \Tt\ as a (weakly simple) polygon and divide the vertex set of \Tt\ into two types. The {\em stable vertices\/} of \Tt\ coincide with reflex vertices of~\P, even if they sometimes correspond to convex vertices in \Tt; see Figure~\ref{fig:unrollingexample}(c). The {\em moving vertices\/} of \Tt\ are the reflections on the gates in \GGt. We partition \Tt\ into subpaths going from one stable vertex to the next one in counterclockwise order along \Tt. Any such path is either a line segment between stable vertices of \Tt\ or it is a path that starts at a stable vertex, passes a consecutive sequence of moving vertices, and finishes then again at a~stable vertex. We call such a subpath a {\em maximal moving subpath of \Tt}. Any maximal moving subpath of \Tt\ has the following property.
\begin{lemma}\label{lem:onlythreegates}
A maximal moving subpath \Ct\ of \Tt\ has at most three moving vertices and they touch gates in order having alternating colors.
\end{lemma}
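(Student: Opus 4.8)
The plan is to analyze a maximal moving subpath $\Ct$ as a shortest path (in the unrolled two-manifold picture) that reflects off a sequence of consecutive gates, and to exploit the fact that these gates are all parallel segments, all with angle $\theta$ to the $x$-axis, since every $\theta$-cut is collinear with the line $\lx$ of that direction. The two colors, red and blue, encode the two possible orientations of a gate as a directed cut (edges locally to the right versus locally to the left of $\lx$), and hence the side from which the tour must touch $\LL{\c}$. I would set this up by recalling from Section~\ref{subsec:2man} that $\Ct$ corresponds, after unrolling, to a straight segment in $\HH{\v}(\theta)$ whose folds occur exactly at the gates of $\GGt$ that it visits; each fold is a perfect reflection (Heron). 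The goal then becomes a statement about how many such parallel mirrors a shortest folded segment can consecutively touch before it is forced to pass through a stable vertex.

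First I would prove the alternating-color claim. Suppose two consecutive moving vertices of $\Ct$ touch gates $\c$ and $\c'$ of the same color. Same color means the two cuts are directed the same way and the regions $\LL{\c},\LL{\c'}$ lie on the same side of the common direction $\lx$. Because the gates are parallel and the subpath must reach $\LL{\c}$ and then $\LL{\c'}$ while staying in $\P'(\theta)$, I would argue that a same-color consecutive pair is redundant: the constraint imposed by one of them is already satisfied when the tour meets the other (one $\LL$-region contains the relevant portion of the other, via the dominance setup), so a genuine reflection at both cannot occur on a shortest subpath without an intervening stable vertex. Equivalently, in the unrolled manifold, reflecting twice in a row across two parallel mirrors of the same orientation merely translates the image and cannot shorten the path, so the optimal straight segment would not bend at both. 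Hence consecutive moving vertices must alternate colors.

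Next I would bound the number of moving vertices by three. Given alternation, the touched gates are parallel segments whose orientations strictly alternate red, blue, red, blue, $\dots$ along $\Ct$. The key geometric fact is that after unrolling, the chain of reflections across parallel mirrors with alternating orientation produces image gates that march monotonically in one direction (each reflection across a parallel line advances the unrolled copy by a fixed translation determined by the spacing), so that a straight shortest segment can cross at most a bounded number of them before the monotonicity of the $\theta$-coordinate along the subpath would be violated — which would contradict $\Ct$ being part of a valid (and locally shortest) watchman tour. I would make this precise by showing that a fourth consecutive moving vertex would force the folded path to re-enter a region it has already left, or to cross a gate it has already used, again contradicting shortest-path optimality in $\HH{\v}(\theta)$.

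The main obstacle I expect is the counting step, i.e.\ pinning down exactly why three and not four (or more) reflections are possible. The alternation argument is essentially local and robust, but the sharp bound of three requires a careful case analysis of how a straight segment in the unrolled manifold can intersect the sequence of alternately-oriented parallel image-gates while the endpoints of $\Ct$ are stable vertices of $\P'(\theta)$. I would organize this by considering the projection onto the direction perpendicular to $\lx$ and tracking how the gate images are ordered there; the delicate point is handling the transition regions (the breadth-first-search corridors between consecutive essential edges described in Section~\ref{subsec:2man}), which is where a naive bound might over- or under-count and where the geometry of $\P'(\theta)$ must be invoked to rule out a fourth reflection.
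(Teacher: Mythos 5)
Your outline matches the paper's proof in its first half: both arguments pass to the shortest path \SP{\v,\v'}\ in the unrolled two-manifold \HHt{\v}\ and derive color alternation from the fact that all gates are parallel, so that two consecutive same-color gates with no stable vertex between them would force one to dominate the other, contradicting that both are gates. That part of your plan is sound and is essentially the paper's argument verbatim.

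The gap is in the bound of three, which you yourself flag as the main obstacle, and neither of the two mechanisms you sketch for closing it works as stated. First, there is no monotonicity requirement on the tour itself: \Tt\ is a watchman tour \emph{under} $\theta$-monotone visibility, but \Tt\ need not be a $\theta$-monotone curve, so ``the monotonicity of the $\theta$-coordinate along the subpath would be violated'' is not a contradiction with anything. Second, the path in \HHt{\v}\ cannot ``re-enter a region it has already left'' or fail shortest-path optimality \emph{in the manifold}: it is by construction a genuine shortest path there, a straight chain meeting each mirror copy of a gate once, so no contradiction can be extracted inside \HHt{\v}. The paper closes this step differently: four or more consecutive alternating reflections on parallel gates force the folded-back tour \Tt\ to self-intersect in \P, and a self-intersecting closed tour can always be shortened (Chin and Ntafos), contradicting the optimality of \Tt. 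In other words, the contradiction lives in the polygon after folding back, not in the two-manifold. Your perpendicular-projection bookkeeping could probably be turned into exactly that self-crossing argument, but as written the plan does not reach it.
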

\begin{proof}
Let \v\ be the stable vertex at the first endpoint of \Ct. Let \SP{\v,\v'}\ be the shortest path between \v\ and its image $\v'$ in the two-manifold \HHt{\v}. The moving vertices of \Ct\ correspond to consecutive crossings of gates by \SP{\v,\v'}\ in \HHt{\v}\ without touching a stable vertex. Since all gates are parallel, when following \SP{\v,\v'}, no two consecutive gates in \HHt{\v}\ can have the same color without \SP{\v,\v'}\ (and thus \Tt) touching a stable vertex, otherwise one of them dominates the other, contradicting that they are both gates. Thus, the sequence of consecutive gates in \HHt{\v}\ is color alternating. Next, it is clear that the sequence of gates cannot consist of more than three gates, since four or more would mean that \Tt\ is self intersecting and thus could be shortened~\cite{ChiNta:rectwatchman,ChiNta:watchman}. 
\end{proof}

Given \Tt\ and the set of gates \GGt, assume we increase the rotation to $\theta+\varepsilon$ to obtain the tour \Tte\ and the set of gates \GGte; we refer to such a rotation as an {\em $\varepsilon$-rotation}. We say that \Tt\ and \Tte\ are {\em close} if each of the following properties hold:
\begin{enumerate}
    \item The stable vertices of \Tt\ and \Tte\ are the same.
    \item The gate vertices for the gates in \GGt\ and \GGte\ in \P\ are the same.
    \item For any pair of gates $\gt\in\GGt$ and $\gte\in\GGte$ with the same gate vertex, they also have the same gate edge.
    \item For any pair of gates $\gt\in\GGt$ and $\gte\in\GGte$ with the same gate vertex~\vgt, if \Tt\ touches \vgt, then \Tte\ also touches \vgt, and if \Tt\ touches the other endpoint of \gt, then \Tte\ also touches the other end point of~\gte.
\end{enumerate}
We claim the following lemma.

\begin{lemma}\label{lem:smoothrotation}
\!\!If \Tt\ and \Tte\ are close, then $\len{\Tte}\!\!=\!\!\len{\Tt} + \sum_{k=1}^{|\GGt|} f_k(\varepsilon)$, where

\begin{small}
\begin{align*}
f_k(\varepsilon)
    &=
    \sqrt{\frac{a^2_{k,0} + a_{k,1}\tan\varepsilon+a_{k,2}\tan^2\varepsilon}{1 + a_{k,3}\tan\varepsilon+a_{k,4}\tan^2\varepsilon}}\! -\! a_{k,0}
    +
    \sqrt{\frac{a^2_{k,5} + a_{k,6}\tan\varepsilon+a_{k,7}\tan^2\varepsilon}{1 + a_{k,8}\tan\varepsilon+a_{k,9}\tan^2\varepsilon}}\! -\! a_{k,5}
    \nonumber\\
    & \qquad\quad +
    \sqrt{\frac{a^2_{k,10} + a_{k,11}\tan\varepsilon + a_{k,12}\tan^2\varepsilon + a_{k,13}\tan^3\varepsilon + a_{k,14}\tan^4\varepsilon}{1 + a_{k,15}\tan\varepsilon + a_{k,16}\tan^2\varepsilon + a_{k,17}\tan^3\varepsilon + a_{k,18}\tan^4\varepsilon}} - a_{k,10}
    \nonumber\\
    & \qquad\quad +
    \sqrt{\frac{a^2_{k,19} + \sum_{i=1}^{14} a_{k,19+i}\tan^i\varepsilon}{1 + \sum_{i=1}^{14} a_{k,33+i}\tan^i\varepsilon}} - a_{k,19},
\end{align*}
\end{small}for some constants $a_{k,0},\ldots,a_{k,47}$, $1\leq k\leq|\GGt|$, only depending on the stable vertices, the gate vertices, the gate edges, and the angle~$\theta$.
\end{lemma}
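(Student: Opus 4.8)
The plan is to exploit the unrolling/reflection picture of Section~\ref{subsec:2man} and reduce $\len{\Tte}$ to a sum of Euclidean edge lengths, each of which is the square root of a rational function of $\tan\varepsilon$. First I would use the closeness hypothesis to match the two tours combinatorially: Properties~1--4 guarantee that $\Tt$ and $\Tte$ have the same stable vertices, the same gate vertices, the same gate edges, and the same touching pattern, so by Lemma~\ref{lem:onlythreegates} both tours split into the \emph{same} sequence of maximal moving subpaths, each running from a stable vertex \v\ to the next stable vertex through at most three moving vertices on gates of alternating colour. Since the purely stable segments between consecutive subpaths are unchanged, $\len{\Tte}-\len{\Tt}$ equals the sum, over the maximal moving subpaths, of the change in length of the (at most four) edges of each subpath. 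Attributing each such edge change to one of the gates it is incident to yields the claimed $\sum_{k=1}^{|\GGt|} f_k(\varepsilon)$, and it then remains to express a single edge length as a function of~$\varepsilon$.

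Next I would write the reflection across a rotating gate explicitly. A gate in \GGte\ issues from a \emph{fixed} gate vertex \p\ and makes angle $\theta+\varepsilon$ with the $x$-axis, so reflecting a point $x$ across it is the involution
\[
x \longmapsto \p + M_{\theta+\varepsilon}\,(x-\p),
\]
where $M_{\theta+\varepsilon}$ is the reflection matrix built from $\cos 2(\theta+\varepsilon)$ and $\sin 2(\theta+\varepsilon)$. Putting $t=\tan\varepsilon$ and substituting $\cos 2\varepsilon=(1-t^2)/(1+t^2)$ and $\sin 2\varepsilon=2t/(1+t^2)$, every entry of $M_{\theta+\varepsilon}$ becomes a ratio of a degree-$\le 2$ polynomial in $t$ over $1+t^2$, with coefficients depending only on $\theta$ and the fixed data. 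Two facts keep the degrees under control: $M_{\theta+\varepsilon}^2=I$, so that composing reflections across equal-angle lines collapses (two give a translation, three give a single reflection); and the second endpoint of a gate, lying on the fixed gate edge, is the intersection of the rotating gate line with a fixed line and hence has coordinates that are ratios of degree-$1$ polynomials in~$t$.

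Then I would evaluate the individual edges. Each moving vertex is either pinned at its gate vertex (a fixed point, independent of $\varepsilon$), pinned at the gate-edge endpoint (coordinates linear-over-linear in~$t$), or a genuine interior reflection point obtained by unrolling the subpath and intersecting the straightened segment \SP{\v,\vpt}\ with the successive gate images in \HHt{\v}. Because reflections are isometries, every edge length of the subpath equals the length of the corresponding piece of \SP{\v,\vpt}, i.e.\ the Euclidean distance between two points whose coordinates are rational in~$t$; this distance is therefore the square root of a ratio of polynomials in~$t$, and subtracting its value at $\varepsilon=0$ (where $t=0$ forces the root to equal a constant $a_{k,\cdot}$) produces a summand of exactly the displayed shape that vanishes at $\varepsilon=0$. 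The four degrees $2,2,4,14$ record how many composed reflections separate an edge's two endpoints from the fixed anchors: an edge meeting a fixed or once-reflected (pinned) endpoint contributes a degree-$2$ ratio, an edge joining two pinned endpoints contributes a degree-$4$ ratio, and the edge incident to the twice-unrolled interior reflection point contributes the degree-$14$ ratio.

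I expect the main obstacle to be exactly this last degree count. One has to push the composition of up to three reflection matrices and the segment--gate intersections through symbolically and verify that, after clearing the $1+t^2$ denominators, the deepest term really has numerator and denominator of degree exactly $14$ while the remaining terms stay at degrees $2$, $2$, $4$, and that the additive constants line up so that $f_k(0)=0$ and the leading numerator constant equals $a_{k,0}^2$, $a_{k,5}^2$, $a_{k,10}^2$, $a_{k,19}^2$, respectively. A secondary nuisance is to treat uniformly the subcases of one, two, or three moving vertices together with the admissible pinning patterns permitted by closeness, arguing that each specializes to the single stated form with the unused coefficients set to zero.
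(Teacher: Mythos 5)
Your proposal is correct and follows essentially the same route as the paper: both decompose $\len{\Tte}-\len{\Tt}$ over the maximal moving subpaths via Lemma~\ref{lem:onlythreegates} and the closeness properties, then case-analyse which endpoints are pinned (gate vertex, gate-edge endpoint, or free reflection) to express each edge length as the square root of a rational function of $\tan\varepsilon$ vanishing at $\varepsilon=0$. Your explicit reflection-matrix and rational-parametrization machinery is just a more systematic rendering of what the paper calls ``routine trigonometry,'' and the degree bookkeeping you flag as the main obstacle is likewise asserted rather than derived in the paper's own proof.
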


\begin{proof}
Let \Ct\ be a maximal moving subpath of \Tt. We consider three cases, depending on the number (at most three by Lemma~\ref{lem:onlythreegates}) of gates touched by \Ct.

\medskip
\noindent Case 1: {\em \Ct\ touches only one gate \gt}. Assume without loss of generality that \gt\ is red and has gate vertex \vgt, the other case is completely symmetric; see Figure~\ref{fig:smoothrotation}(a). Let \u\ and \v\ be the endpoints of \Ct, where \v\ is reached before \u\ along a counterclockwise traversal of \Tt\ starting at a point not on~\Ct.

\begin{figure}[t]
\begin{center}
    \input{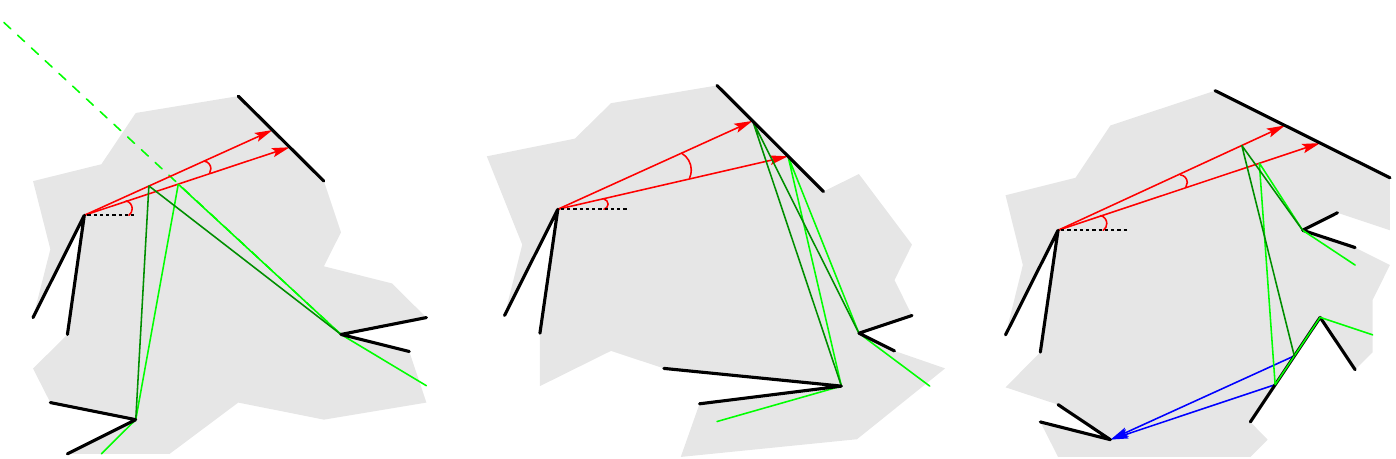_t}
    \caption{Illustrating the proof of Lemma~\protect\ref{lem:smoothrotation}.}\label{fig:smoothrotation}
\end{center}
\end{figure}

If \Ct\ does not touch any endpoint of $\gt=\g(\theta)$, then by reflecting \u\ along~\gt, we obtain the point \upt, where the coordinates only depend on \u\ and \gt. Thus, $\len{\Ct}=\len{\v,\upt}=\len{\v,p}+\len{p,\u}$, where $p$ is the intersection between $[\v,\upt]$ and~\gt. Similarly, $\len{\Cte}=\len{\v,\upte}$, where \upte\ is the reflection of \u\ along \gte; see Figure~\ref{fig:smoothrotation}(a). By routine trigonometry, we have

\begin{align}\label{eqn:oneperfect}
    f(\varepsilon)
    &=
    \len{\Cte}-\len{\Ct} = \sqrt{\frac{a^2_0 + a_1\tan\varepsilon+a_2\tan^2\varepsilon}{1 + a_3\tan\varepsilon+a_4\tan^2\varepsilon}} - a_0,
\end{align}
for some constants $a_0,\ldots,a_4$ only depending on the points \u, \v, \vgt, and the angle~$\theta$.

Assume next that \Ct\ touches an endpoint of \gt. If that endpoint is \vgt, then $\len{\Ct}=\len{\Cte}$, and so $f(\varepsilon)=0$. If \Ct\ touches the other endpoint of \gt, then

\begin{align}\label{eqn:onetouch}
    f(\varepsilon)
    &=
    \sqrt{\frac{a^2_0 + a_1\tan\varepsilon+a_2\tan^2\varepsilon}{1 + a_3\tan\varepsilon+a_4\tan^2\varepsilon}} - a_0
    +
    \sqrt{\frac{a^2_5 + a_6\tan\varepsilon+a_7\tan^2\varepsilon}{1 + a_8\tan\varepsilon+a_9\tan^2\varepsilon}} - a_5,
\end{align}
for some constants $a_0,\ldots,a_9$ only depending on the points \u, \v, \vgt, the gate edge, and the angle~$\theta$; see Figure~\ref{fig:smoothrotation}(b).

\medskip
\noindent Case 2: {\em \Ct\ touches two gates \gt\ and \gpt}. First, it follows from the proof of Lemma~\ref{lem:onlythreegates} that \gt\ and \gpt\ cannot have the same color. Next, consider the segment~\s\ of \Ct\ connecting \gt\ and~\gpt. \Ct\ cannot have its endpoints on either side of~\s\ \big(in $\P'(\theta)$\big), as otherwise \Tt\ must self intersect, a contradiction~\cite{ChiNta:watchman}. Thus, both \u\ and \v\ lie on the same side of \s, whereby \s\ must touch the boundary of \P, as otherwise \Tt\ could be made shorter by shifting \s\ along the gates \gt\ and \gpt\ towards \u\ and \v, another contradiction. Since our assumption is that \Ct\ does not touch a stable vertex in its interior (a reflex vertex in \P), \s\ must touch at least one endpoint of one of the gates \gt\ or \gpt. If this endpoint is a gate vertex, we can view this gate vertex as a~fixed point of \Tt\ (or stable vertex) and either $f(\varepsilon)=0$, if also the other endpoint of \s\ touches a gate vertex, or one of Equalities~(\ref{eqn:oneperfect}) and~(\ref{eqn:onetouch}) hold by the previous argument.

If \s\ touches the boundary of \P\ at a gate edge, we have two cases. Either, the other endpoint of \s\ touches the interior of the other gate or it also touches a gate edge. In the first case, we have by routine trigonometry that

\begin{align}\label{eqn:twoperfect}
    f(\varepsilon)
    &=
    \sqrt{\frac{a^2_0 + a_1\tan\varepsilon+a_2\tan^2\varepsilon}{1 + a_3\tan\varepsilon+a_4\tan^2\varepsilon}} - a_0
    +
    \sqrt{\frac{a^2_5 + \sum_{i=1}^{12} a_{5+i}\tan^i\varepsilon}{1 + \sum_{i=1}^{12} a_{17+i}\tan^i\varepsilon}} - a_5,
\end{align}
for some constants $a_0,\ldots,a_{29}$ only depending on the points \u, \v, \vgt, \vgpt, the touched gate edge, and the angle~$\theta$; see Figure~\ref{fig:smoothrotation}(c). In the second case, we have two endpoints of three segments that both touch gate edges, and we obtain

\begin{align}\label{eqn:twotouch}
    f(\varepsilon)
    &=
    \sqrt{\frac{a^2_0 + a_1\tan\varepsilon+a_2\tan^2\varepsilon}{1 + a_3\tan\varepsilon+a_4\tan^2\varepsilon}} - a_0
    +
    \sqrt{\frac{a^2_5 + a_6\tan\varepsilon+a_7\tan^2\varepsilon}{1 + a_8\tan\varepsilon+a_9\tan^2\varepsilon}} - a_5
    \nonumber\\
    &+
    \sqrt{\frac{a^2_{10} + a_{11}\tan\varepsilon + a_{12}\tan^2\varepsilon + a_{13}\tan^3\varepsilon + a^2_{14}\tan^4\varepsilon}{1 + a_{15}\tan\varepsilon + a_{16}\tan^2\varepsilon + a_{17}\tan^3\varepsilon + a^2_{18}\tan^4\varepsilon}} - a_{10},
\end{align}
for some constants $a_0,\ldots,a_{18}$ only depending on the points \u, \v, \vgt, \vgpt, the two touched gate edges, and the angle~$\theta$.

\medskip
\noindent Case 3: {\em \Ct\ touches three gates \gt, \gpt, and \gppt}. Again, observe that they must have alternating color (see the proof of Lemma~\ref{lem:onlythreegates}). Next, by the same argument as in the previous case, at least two segments of \Ct\ connecting the three gates must touch the boundary of \P. If such a touching point is a gate vertex, this reduces to one of Equalities~(\ref{eqn:oneperfect})--(\ref{eqn:twotouch}) since we can view this vertex of \Tt\ as stable. If \Ct\ touches the middle gate in the interior, then again by routine trigonometry we obtain

\begin{align}\label{eqn:threeperfect}
    f(\varepsilon)
    &=
    \sqrt{\frac{a^2_0 + a_1\tan\varepsilon+a_2\tan^2\varepsilon}{1 + a_3\tan\varepsilon+a_4\tan^2\varepsilon}} - a_0
    +
    \sqrt{\frac{a^2_5 + a_6\tan\varepsilon+a_7\tan^2\varepsilon}{1 + a_8\tan\varepsilon+a_9\tan^2\varepsilon}} - a_5,
    \nonumber\\
    & \qquad\quad +
    \sqrt{\frac{a^2_{10} + \sum_{i=1}^{14} a_{10+i}\tan^i\varepsilon}{1 + \sum_{i=1}^{14} a_{24+i}\tan^i\varepsilon}} - a_{10},
\end{align}
for some constants $a_0,\ldots,a_{38}$ only depending on the points \u, \v, \vgt, \vgpt, \vgppt, the two touched gate edges, and the angle~$\theta$. Finally, if \Ct\ touches the middle gate at its gate edge, then we obtain

\begin{align}\label{eqn:threetouch}
    f(\varepsilon)
    &=
    \sqrt{\frac{a^2_0 + a_1\tan\varepsilon+a_2\tan^2\varepsilon}{1 + a_3\tan\varepsilon+a_4\tan^2\varepsilon}} - a_0
    +
    \sqrt{\frac{a^2_5 + a_6\tan\varepsilon+a_7\tan^2\varepsilon}{1 + a_8\tan\varepsilon+a_9\tan^2\varepsilon}} - a_5
    \nonumber\\
    &+
    \sqrt{\frac{a^2_{10} + a_{11}\tan\varepsilon + a_{12}\tan^2\varepsilon + a_{13}\tan^3\varepsilon + a^2_{14}\tan^4\varepsilon}{1 + a_{15}\tan\varepsilon + a_{16}\tan^2\varepsilon + a_{17}\tan^3\varepsilon + a^2_{18}\tan^4\varepsilon}} - a_{10}
    \nonumber\\
    &+
    \sqrt{\frac{a^2_{19} + a_{20}\tan\varepsilon + a_{21}\tan^2\varepsilon + a_{22}\tan^3\varepsilon + a^2_{23}\tan^4\varepsilon}{1 + a_{24}\tan\varepsilon + a_{25}\tan^2\varepsilon + a_{26}\tan^3\varepsilon + a^2_{27}\tan^4\varepsilon}} - a_{19},
\end{align}
for some constants $a_0,\ldots,a_{27}$ only depending on the points \u, \v, \vgt, \vgpt, \vgppt, the three touched gate edges, and the angle~$\theta$. Clearly, combining Equalities~(\ref{eqn:oneperfect})--(\ref{eqn:threetouch}) gives us the lemma.
\end{proof}
By Lemma~\ref{lem:smoothrotation}, as long as \Tt\ maintains the closeness properties in a small neighborhood, $\theta+\varepsilon$ of $\theta$, with $\varepsilon>0$, the length function $\len{\Tt}$ is smooth (continuous and differentiable), and we can obtain the angles of minima for $\len{\Tt}$ using standard analytic methods. Since
the function consists of $O(|\GGt|)$ terms, requiring us to test $O(|\GGt|)$ potential solutions, we can do this in $O(|\GGt|)$ time.

However, when the closeness properties do not hold, at least one of the following changes occur:
the current set of stable vertices of \Tt\ changes, the current set of gate vertices changes, some gate in the current set \GGt\ changes its gate edge, or the tour \Tt\ reaches or leaves an~endpoint of a gate. We call the angles where such changes occur {\em events\/} and present them in further detail next. 

\subsection{Events}\label{sec:events}
In general, we have two types of events: those 
defined by the vertices of the polygon (or pairs of them), and those defined by stable and moving vertices of the current tour. We further subdivide them into the following six types.
%
\begin{description}
    \item[Validity event:] a new gate arises or an old gate disappears. This happens when the gate becomes collinear to a polygon edge adjacent to the gate vertex. 

    \item[Domination event:] a gate ``changes gate vertex'', i.e., a cut \c\ issued from a~vertex \v, previously dominated by a gate \g\ with gate vertex $\v'$, becomes collinear to~\g, \v~and $\v'$ have the same color, and as the rotation proceeds, \v~becomes the new gate vertex.

    \item[Jumping event:] the endpoint of a gate \g\ on the gate edge, reaches a reflex vertex of \P\ issuing a cut of different color to that of~\g.

    \item[Passing event:] the endpoint of a gate on the gate edge, reaches an uncolored reflex vertex or a convex vertex of~\P.

    \item[Bending event:] a maximal moving subpath of \Tt\ reaches or leaves a reflex vertex of~$\P'(\theta)$.

    \item[Cuddle event:] a moving vertex of the tour reaches or leaves a gate endpoint.
\end{description}

\noindent
We have the following lemma.
\begin{lemma}\label{lem:complete}
The set of events is complete.
\end{lemma}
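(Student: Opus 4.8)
The plan is to prove completeness by making its operational meaning explicit: between any two consecutive events the tour \Tt\ remains close to every nearby \Tte, so that the four closeness properties hold and, by Lemma~\ref{lem:smoothrotation}, $\len{\Tt}$ is smooth there. Equivalently, I would prove the contrapositive: whenever closeness breaks at some angle, one of the six listed event types must occur. Since closeness is exactly the conjunction of properties~1--4, it suffices to enumerate, for each property, all the ways it can be violated as $\theta$ increases and to match each violation with a specific event type. The observation I would record first is that the geometric data defining \Tt\ --- the $\theta$-cuts, the gate set \GGt\ together with each gate's gate vertex and gate edge, and the combinatorial type of the tour (its stable and moving vertices) --- vary continuously with $\theta$, so a closeness property can fail only at the discrete angles where one of these combinatorial descriptions changes.

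First I would treat property~1, the invariance of the stable-vertex set. A stable vertex is a reflex vertex of $\P'(\theta)$ lying on \Tt, so this set changes precisely when a maximal moving subpath begins or ceases to bend at such a reflex vertex --- exactly a \textbf{Bending event}. For property~2, the invariance of the gate-vertex set, I would argue that this set can change in only two ways. By the definition of a $\theta$-cut, a cut exists if and only if the two edges incident to its reflex vertex lie on the same side of the line \lx; a cut therefore appears or vanishes exactly when an incident edge becomes parallel to direction $\theta$, which is when the former gate becomes collinear with that edge --- a \textbf{Validity event}. Otherwise a gate retains its vertex unless a previously dominated, same-color cut becomes collinear with it and overtakes it, which is precisely a \textbf{Domination event}. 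These two possibilities exhaust the changes of the gate-vertex set.

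I would then handle property~3, the invariance of each gate's gate edge. For a surviving gate with fixed gate vertex, its far endpoint slides continuously along $\partial\P$, so the gate edge changes only when that endpoint crosses a polygon vertex. Classifying the crossed vertex by type yields a trichotomy that I would verify is exhaustive: a reflex vertex issuing a cut of the opposite color gives a \textbf{Jumping event}; an uncolored reflex vertex or a convex vertex gives a \textbf{Passing event}; and a reflex vertex issuing a same-color cut is the collinearity already subsumed by the Domination event (and, as in the proof of Lemma~\ref{lem:onlythreegates}, two collinear same-color cuts cannot both persist as gates). Finally, property~4 requires that \Tt\ touch the same gate endpoints; such a contact is created or destroyed exactly when a moving vertex of the tour slides onto or off a gate endpoint --- a \textbf{Cuddle event}. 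Having matched every possible violation of properties~1--4 to one of the six event types, I would conclude that these six types together capture every breakdown of closeness, which is the asserted completeness.

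The step I expect to be the main obstacle is the exhaustiveness of the gate-combinatorics analysis behind properties~2 and~3: one must argue that the dominance relation among the parallel $\theta$-cuts, and the identity of each gate's vertex and edge, can change only through the discrete collinearity and vertex-crossing coincidences listed above, and that the color-based trichotomy for what a sliding gate endpoint may hit leaves no case unaccounted for. A secondary technical point is to rule out degenerate simultaneous coincidences and to confirm that the three color categories of a reached vertex are mutually exclusive and collectively exhaustive, so that the six event types neither overlap in a way that conceals a transition nor miss one. For completeness only the sufficiency direction is strictly needed, but checking that each listed event genuinely arises would additionally show the list is tight.
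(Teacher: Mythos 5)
Your proposal is correct and follows essentially the same route as the paper: it takes the contrapositive of each of the four closeness properties and matches every possible violation to one of the six event types, with the same case analysis (bending for stable vertices; validity/domination for the gate-vertex set; the color-based trichotomy of jumping/passing/domination for gate-edge changes; cuddle for gate endpoints). The extra remarks on continuity and exhaustiveness only elaborate on what the paper leaves implicit.
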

\begin{proof}
Consider the four properties necessary for two tours \Tt\ and \Tte\ to be close. We take the contrapositive for each property and show that the only cases when these can occur is if one of the listed events occurs for~\Tt.
\begin{description}
\item{\em The current set of stable vertices of \Tt\ changes.}
Since the only part of \Tt\ that changes under $\varepsilon$-rotation are the maximal moving subpaths, a stable vertex can never be directly exchanged for another vertex. Therefore, the only other possibilities are that a stable vertex is either added to or removed from \Tt, but these are exactly the {\em bending events}.

\item{\em The current set of gate vertices changes.}
There are three possibilities that this can happen. First, a gate vertex is exchanged for another gate vertex. In order for this to happen, there must be some angle when a gate is collinear to a $\theta$-cut of some other reflex vertex with the same color. These are exactly the {\em domination events}. The two other possibilities are that a gate is either added to or removed from \GGt, but these are exactly the {\em validity events}.

\item{\em Some gate in the current set \GGt\ changes gate edge.}
For this to happen, the endpoint of the gate opposite the gate vertex must lie at a vertex of the polygon. If this vertex is reflex and has the same color as the current gate, we have a~{\em domination event}. If the vertex is reflex but has the opposite color of the gate, by definition, we have a {\em jumping event}. If the vertex is either reflex but uncolored or convex, then we have exactly a {\em passing event}. Since vertices can be of no other types, these three event types cover this case.

\item{\em The tour \Tt\ reaches or leaves an endpoint of a gate.}
This is exactly the definition of the {\em cuddle events}.
\end{description}
%
Thus, the six event types completely cover all the cases.
\end{proof}
The reason for defining six types of events is that our algorithm will handle each of them slightly differently, as is explained in the next subsection.

\subsection{Handling events}\label{sec:eventhandling}
The algorithm maintains, for a given angle $\theta$, the following information: \Tt, \GGt, \HHt{\v} for a reflex vertex \v\ coinciding with a stable vertex of \Tt, the change function $\len{\Tte} =\len{\Tt}+\sum_{1\leq k\leq|\GGt|}f_k(\varepsilon)$
(a function of $\varepsilon$), for each gate in \GGt, the visibility polygon of the gate vertex (for standard straight line visibility, not monotone visibility), and a priority queue \QQ\ maintaining $O(|\GGt|)$ angles of future potential events, a constant number for each gate. We next present how each event is handled during the running of the algorithm as the direction $\theta$ rotates from~$0^{\circ}$ to~$180^{\circ}$\!.

\begin{description}
\item[ALGORITHM]~
\\
Let $\Tc\leftarrow\Tz$, $\theta\leftarrow0^{\circ}$\!, compute the event angles (c.f.~Step~\ref{enum:getangles} in the validity event routine below), insert them in \QQ, and repeat the following steps while $\theta<180^{\circ}$\!.
\par\noindent
Step 1.~~Get the next event angle $\theta$ from the priority queue~\QQ.
\par\noindent
Step 2.~~Depending on the event type, perform one of the following routines as described below.

\smallskip
\item[Validity event routine.] For each such event:
\begin{enumerate}
    \item Compute the gates \GGt\ and the optimal tour \Tt\ for direction $\theta$ in $O(n)$ time as explained in Section~\ref{subsec:2man}.

\item Empty the priority queue~\QQ.

\item For each segment \s\ of the shortest path in \HHt{\v}\ that crosses a gate, we establish the shortest path from the endpoints of the gate to the endpoints of \s, and associate the reflex vertices on those paths that are closest to the endpoints of \s\ (at most four). We can quickly test (in $O(|\GGt|)$ total time) whether any such vertex crosses $s$ during the subsequent rotation and establish at what rotation angle this happens, i.e., the potential next bending and cuddle events.
Insert each of the two events per gate in~\QQ.
\label{enum:nextevent}

\item For each of the gates in \GGt, compute the visibility polygon (for standard straight line visibility) for the gate vertex and obtain the next passing event, jumping event, domination event, and validity event (if they exist) by traversing the boundary of the visibility polygon starting from the angle $\theta$ from the gate vertex. This takes $O(n)$ time per gate.
Insert each of the four events per gate in~\QQ. This takes a total of $O(n|\GGt|)$ time for all the gates.
\label{enum:getangles}




\item Look at the next potential event angle $\theta'$ in \QQ. Compute the change function \len{\Tte}\ and the best local angle $\theta+\varepsilon$, for $\varepsilon>0$, such that $\theta<\theta+\varepsilon\leq\theta'$. If $\len{\Tte}<\Tc$, then update \Tc. This takes $O(|\GGt|)$ time, since the change function has $O(|\GGt|)$ terms, each being the square root of rational polynomials of degree at most~$14$; see Lemma~\ref{lem:smoothrotation}.
\label{enum:update}

\end{enumerate}

\smallskip\noindent
The time complexity is $O(n|\GGt|)$ for this case.

\medskip
\item[Domination event routine.] 
For each such event:
\begin{enumerate}
    \item Update the set of gates in \GGt\ by exchanging one gate for a collinear gate with different gate vertex. This takes constant time.

    \item Remove the events in \QQ\ associated to the old gate. This takes $O(|\GGt|)$ time by a traversal of~\QQ.

    \item For the new gate vertex, compute the visibility polygon around the gate vertex, and obtain the next passing event, jumping event domination event, and validity event (if they exist) by traversing the boundary of the visibility polygon starting from the angle $\theta$ from the gate vertex.
    Insert each of the four events in~\QQ. This takes $O(n)$ time.

    \smallskip
    \item Perform Step~\ref{enum:update} as for validity events.
\end{enumerate}

\smallskip\noindent The time complexity is $O(n)$ for this case.

\medskip
\item[Jumping event routine.] 
For each such event:
\begin{enumerate}
    \item Since the set of gate vertices does not change, only update the gate edge and then recompute the tour obtaining the next jump event by continuing the traversal of the boundary of the visibility polygon starting from the angle $\theta$ from the gate vertex. Add it to \QQ. This takes $O(n)$ time.

    \item Perform Step~\ref{enum:update} as for validity events.
\end{enumerate}

\medskip
\item[Passing event routine.] 
For each such event:
\begin{enumerate}
    \item 
    Update the change function \len{\Tte}\ with the appropriate term consisting of the square root of rational polynomials of degree at most 14. This takes $O(|\GGt|)$ time.

    \item Perform Step~\ref{enum:update} as for validity events.
\end{enumerate}

\medskip
\item[Bending event routine.] Here, neither the set of gate vertices nor the set of gate edges change, so we proceed as in Steps~\ref{enum:nextevent} and~\ref{enum:update} for validity events.
The time complexity is $O(|\GGt|)$.

\medskip
\item[Cuddle event routine.] Handled (and detected)
as bending events, with the same time complexity.
\end{description}

\section{Analysis}\label{sec:analysis}
The correctness of our algorithm follows directly from Lemmas~\ref{lem:smoothrotation} and~\ref{lem:complete}. To analyze the time complexity, define $G\defeq\max_{0^{\circ}\leq\theta<180^{\circ}} |\GGt|$. We note that the number of validity events is~$2r$ and take $O(nG)$ time each, the number of domination and jumping events is $O(rG)$ and take $O(n)$ time each, and the number of passing, bending, and cuddling events is~$O(nr)$ (since they each associate a~vertex, either of the tour or of the polygon, with a reflex vertex of the polygon) and take $O(G)$ time each. Thus,
the complexity of our algorithm is~$O(nrG)$.

The storage complexity of our algorithm is $O(nG)$, since the algorithm maintains at each iteration of the main loop: the tour \Tt\ having linear size, the hourglass \HHt{\v}\ having linear size, the set \GGt\ having size $O(G)\in O(n)$, the change function $\len{\Tte}$ having size $O(G)\in O(n)$, the priority queue \QQ\ having size $O(G)\in O(n)$, and finally, at most $G$ visibility polygons, each of $O(n)$ size (dominating the storage usage).

It remains to prove that for a~fixed angle~$\theta$, we can quickly, in linear time, obtain a constant sized set \VV\ of polygon vertices so that \SP{\v,\v'}\ from \v\ to its image $\v'$ in $\HH{\v}(\theta)$ corresponds to a shortest watchman tour in \P\!, for some $\v\in\VV$. If $\P'(\theta)$ has two essential edges (corresponding to gates in \P) with the same color, we know from the proof of Lemma~\ref{lem:onlythreegates} that the highest reflex vertex along a~path between the gates in a~coordinate system where the gates are parallel to the $x$-axis must be touched by the tour, otherwise it is not the shortest. Since there are two paths between those gates, we obtain a set \VV\ of two vertices in this case. If $\P'(\theta)$ has one red and one blue essential edge, a shortest tour either touches one of the edge endpoints or, since the gates are parallel, it can be slide along the essential edge until it touches a reflex vertex in the polygon. We obtain such a reflex vertex by computing \SP{\p,\p'}\ for each essential edge endpoint \p\ and following \SP{\p,\p'}\ to the last vertex before it intersects the first gate in $\HH{\p}(\theta)$. We thus obtain a~set~\VV\ of at most eight vertices, four essential edge endpoints, and four vertices obtained from~$\HH{\p}(\theta)$.
We have shown the following theorem.
\begin{theorem}
The presented algorithm computes the minimum length shortest $($floating$)$ watchman tour under $\theta$-monotone visibility over all\/ $0^{\circ}\leq\theta<180^{\circ}$ in a simple polygon in $O(nrG)$ time and $O(nG)$ storage, where $n$ is the number of vertices, $r$ is the number of reflex vertices, and $G\leq r$ is the maximum number of gates of the polygon used at any time in the algorithm.
\end{theorem}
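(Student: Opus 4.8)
The plan is to assemble the theorem from the two structural lemmas together with the event analysis and the per-event time bounds established in Section~\ref{sec:eventhandling}, so the theorem becomes essentially a bookkeeping argument over the total cost across all events. First I would state that correctness is immediate: Lemma~\ref{lem:smoothrotation} guarantees that, between consecutive events, $\len{\Tt}$ is a smooth function of $\varepsilon$ given by an explicit sum of $O(|\GGt|)$ radical terms, so Step~\ref{enum:update} correctly locates the minimizing angle in each closeness interval by standard calculus; and Lemma~\ref{lem:complete} guarantees that the set of six event types is complete, so the algorithm never misses an angle at which the closeness properties break. Together these mean that every candidate angle for the global optimum is examined, and $\Tc$ is updated to the true minimum-length tour over all of $[0^{\circ},180^{\circ})$.

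For the running time, the main work is to count each event type and multiply by its per-event cost. I would argue the counts as follows. A validity event occurs exactly when a gate appears or disappears, which happens when a $\theta$-cut becomes collinear with a polygon edge incident to its reflex vertex; each reflex vertex contributes its two incident extensions once over a half-turn, giving $2r$ validity events, each handled in $O(nG)$ time. Domination and jumping events are charged to a (gate, reflex-vertex) interaction: a gate can change its gate vertex or jump to a differently-colored reflex vertex only $O(r)$ times per gate over the rotation, and there are $O(G)$ gates, giving $O(rG)$ such events, each taking $O(n)$ time. The passing, bending, and cuddle events each pair a vertex (either a tour vertex or a polygon vertex) with a reflex vertex of the polygon, so there are $O(nr)$ of them, each handled in $O(G)$ time. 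Summing, $2r\cdot O(nG) + O(rG)\cdot O(n) + O(nr)\cdot O(G) = O(nrG)$, which is the claimed time bound; the storage bound $O(nG)$ follows by listing the maintained structures as in the storage paragraph, the dominant term being the $G$ visibility polygons of size $O(n)$ each.

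The remaining piece, which I expect to be the genuine obstacle rather than routine counting, is the claim that for a fixed~$\theta$ a constant-sized candidate set~\VV\ of polygon vertices suffices, i.e., that some $\v\in\VV$ yields a shortest watchman tour via the shortest path \SP{\v,\v'}\ in $\HH{\v}(\theta)$. Here I would split on the coloring of the essential edges of $\P'(\theta)$. If two essential edges share a color, the argument from the proof of Lemma~\ref{lem:onlythreegates} forces the optimal tour to touch the extreme reflex vertex along each of the two boundary paths between those gates (otherwise the tour could be shortened), yielding $|\VV|=2$. If the essential edges are of opposite colors, an optimal tour either touches one of the (at most four) gate endpoints, or—because the gates are parallel—can be slid along a gate until it touches a reflex vertex, and such a vertex is recovered as the last vertex on \SP{\p,\p'}\ before it meets the first gate in $\HH{\p}(\theta)$ for each essential-edge endpoint~\p, giving at most four further candidates and hence $|\VV|\le 8$. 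Since each \SP{\v,\v'}\ is computed in $O(n)$ time and $|\VV|$ is constant, the shortest tour for a fixed~$\theta$ is found in linear time, completing the justification that each validity event costs $O(nG)$ and closing the analysis.
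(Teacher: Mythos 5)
Your proposal is correct and follows essentially the same route as the paper: correctness from Lemmas~\ref{lem:smoothrotation} and~\ref{lem:complete}, the same event counts ($2r$ validity events at $O(nG)$ each, $O(rG)$ domination/jumping events at $O(n)$ each, $O(nr)$ passing/bending/cuddle events at $O(G)$ each), the same storage accounting dominated by the $G$ visibility polygons, and the same two-case argument for the constant-sized candidate set $\VV$ with $|\VV|\le 8$. No substantive differences to report.
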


\section{Conclusions}\label{sec:conc}
Observe that our approach can also be used to obtain optimal tours for other parameters which are dependent on the rotation angle $\theta$, e.g., the longest of all shortest watchman tours, the one with smallest or largest area, etc. All we need is to adapt Lemma~\ref{lem:smoothrotation} for the specific problem.

An interesting extension of our problem is to minimize the longest out of multiple tours that together see a polygon under rotated monotone visibility. For standard visibility the problem is known to be NP-hard even for two tours~\cite{MitWyn:watchmen} and efficient constant factor approximation algorithms also exist for this case~\cite{NilPac-2watchman:conf}.


\section{Acknowledgements}
Bengt J.~Nilsson was supported by grant 2018-04001 from the Swedish Research Council.
%
David Orden was supported by project 734922-CONNECT from H2020-MSCA-RISE, by project MTM2017-83750-P from the Spanish Ministry of Science (AEI\-/FEDER, UE), and by project PID2019-104129GB-I00 from the Spanish Ministry of Science and Innovation.
%
Carlos Seara was supported by project 734922-CONNECT from  H2020-MSCA-RISE, and by projects MTM2015-63791-R MINE\-CO/FEDER and Gen.~Cat.~DGR~2017SGR1640.
%
Pawe\l{} \.Zyli\'nski was supported by grant 2015/17/B/ST6/01887 from the Polish National Science Centre.

\vspace*{-2ex}%


\begin{thebibliography}{10}

\bibitem{AleOrPaSeUr:capturingpoints}
 {\sc C.~Alegr\'ia-Galicia, D.~Orden, L.~Palios, C.~Seara, J.~Urrutia}.
\newblock Capturing points with a rotating polygon (and a 3D extension).
\newblock {\em Theory Comput.~Syst.} 63(3):543--566,~2019.

\bibitem{AleOrSeUr:convexhull}
{\sc C.~Alegr\'ia-Galicia, D.~Orden, C.~Seara, J.~Urrutia}. 
\newblock Efficient computation of minimum-area rectilinear convex hull under rotation and generalizations.
\newblock \texttt {arXiv:1710.10888v2},~2019.

\bibitem{AsaGhoShe:inhandbook}
{\sc T.~Asano, S.K.~Ghosh, T.~Shermer}.
\newblock Chapter~19. Visibility in the plane.
\newblock In
{\sc J.R.~Sack, J.~Urrutia}, editors.
\newblock {\em Handbook on Computational Geometry}.
\newblock Elsevier Science Publishers,~1999.

\bibitem{CarJonNil:journal}
{\sc S.~Carlsson, H.~Jonsson, B.J.~Nilsson}.
\newblock Finding the shortest watchman route in a simple polygon.
\newblock {\em Discrete Comp.~Geom.} 22:377--402,~1999.

\bibitem{Cha:triangulation}
{\sc B.~Chazelle}.
\newblock Triangulating a simple polygon in linear time.
\newblock {\em Discrete Comp.~Geom.} 6(3):485--524,~1991. 

\bibitem{ChiNta:rectwatchman}
{\sc W.~Chin, S.~Ntafos}.
\newblock Optimum watchman routes.
\newblock {\em Inf.~Proc.~Lett.} 28:39--44,~1988.

\bibitem{ChiNta:watchman}
{\sc W.~Chin, S.~Ntafos}.
\newblock Shortest watchman routes in simple polygons.
\newblock {\em Discrete Comp.~Geom.} 6(1):9--31,~1991.

\bibitem{DroEfrLubMit:watchman}
{\sc M.~Dror, A.~Efrat, A.~Lubiw, J.S.B.~Mitchell}.
\newblock Touring a sequence of polygons.
\newblock {\em 35th STOC'03}, pages 473--482,~2003.

\bibitem{DumTot:holewatchman}
{\sc A.~Dumitrescu, C.D.~T\'oth}.
\newblock Watchman tours for polygons with holes.
\newblock {\em Comput.~Geom.}
45(7):326--333,~2012.

\bibitem{G}
{\sc L.P.~Gewali}.
\newblock Recognizing $s$-star polygons.
\newblock {\em Patt.~Rec.} 28(7):1019--1032,~1995.

\bibitem{GuiHerLevShaTar:visandshortpath}
{\sc L.~Guibas, J.~Hershberger, D.~Leven, M.~Sharir, R.~Tarjan}.
\newblock Linear time algorithms for visibility and shortest path problems inside triangulated simple polygons.
\newblock {\em Algorithmica}. 2:209--233,~1987.

\bibitem{HamNil-SWR:conf}
{\sc M.~Hammar, B.J.~Nilsson}.
\newblock Concerning the time bounds of existing shortest watchman route algorithms.
\newblock {\em 11th FCT'97}, LNCS~1279, pages~210--221,~1997.

\bibitem{Her:catoptrica}
{\sc Heron of Alexandria}.
\newblock Catoptrica (On Reflection).~$\sim\!62$.

\bibitem{HerSur:rayshoot}
{\sc J.~Hershberger, S.~Suri}.
\newblock A pedestrian approach to ray shooting: shoot a ray, take a walk.
\newblock {\em Journal of Algorithms}. 18(3):403--431,~1995.

\bibitem{HBZ19}
{\sc G.~Huski{\'c}, S.~Buck, A.~Zell}.
\newblock {GeRoNa}: Generic robot navigation.
\newblock {\em J.~Intell.~Robot.~Syst.} 95(2):419--442,~2019.

\bibitem{IK}
{\sc C.~Icking, R.~Klein}.
\newblock Searching for the kernel of a polygon: a competitive strategy.
\newblock {\em 11th SoCG'95}, pages 258--266,~1995.

\bibitem{LeePre:shortpath}
{\sc D.T.~Lee, F.P.~Preparata}.
\newblock Euclidean shortest paths in the presence of rectilinear barriers.
\newblock {\em Networks}. 14:393--410,~1984.

\bibitem{Mit:approxwatchman}
{\sc J.S.B.~Mitchell}.
\newblock Approximating watchman routes.
\newblock {\em 24th SODA'13}, pages 844--855,~2013.

\bibitem{MitWyn:watchmen}
{\sc J.S.B.~Mitchell, E.L. Wynters}.
\newblock Watchman routes for multiple guards.
\newblock {\em 3rd CCCG'91}, pages~126--129, 1991.

\bibitem{NilPac-2watchman:conf}
{\sc B.J.~Nilsson, E.~Packer}.
\newblock An approximation algorithm for the two-watchman route in a simple polygon.
\newblock {\em 32nd EuroCG'16}, pages~111--114,~2016.

\bibitem{OrPaSeUrZy:widestcorridor}
{\sc D.~Orden, L.~Palios, C.~Seara, J.~Urrutia, P.~\.Zyli\'nski}. 
\newblock On the width of the monotone-visibility kernel of a simple polygon.
\newblock {\em 36th EuroCG'20}, pages~13:1--13:8,~2020.

\bibitem{EuroCG18}
{\sc D.~Orden, L.~Palios, C.~Seara, P.~\.Zyli\'nski}.
\newblock Generalized kernels of polygons under rotation.
\newblock {\em 34th EuroCG'18}, pages~74:1--74:6,~2018.

\bibitem{P}
{\sc L.~Palios}.
\newblock An output-sensitive algorithm for computing the $s$-kernel.
\newblock {\em 27th CCCG'15}, pages~199--204,~2015.

\bibitem{SRW}
{\sc S.~Schuierer, G.J.E.~Rawlins, D.~Wood}.
\newblock A generalization of staircase visibility.
\newblock {\em 3rd CCCG'91}, pages~96--99,~1991.

\bibitem{SW}
{\sc S.~Schuierer, D.~Wood}.
\newblock Generalized kernels of polygons with holes.
\newblock {\em 5th CCCG'93}, pages~222--227,~1993.

\bibitem{SW98}
{\sc S.~Schuierer, D.~Wood}.
\newblock Multiple-guards kernels of simple polygons.
\newblock {\em Theoretical Computer Science Center Research Report HKUST-TCSC-98-06},~1998.

\bibitem{Tan:watchman}
{\sc X.-H.~Tan}.
\newblock Fast computation of shortest watchman routes in simple polygons.
\newblock {\em Inf.~Process.~Lett.} 77(1):27--33,~2001.

\bibitem{TanHir:watchman}
{\sc X.-H.~Tan, T.~Hirata}.
\newblock Constructing shortest watchman routes by divide and conquer.
\newblock {\em 4th ISAAC'93}, LNCS~762, pages~68--77,~1993.

\bibitem{TanHirIna:watchman}
{\sc X.-H.~Tan, T.~Hirata, Y.~Inagaki}.
\newblock An incremental algorithm for constructing shortest watchman routes.
\newblock {\em Int.~J.~Comput.~Geom.~Appl.} 3(4):351--365,~1993.

\bibitem{TanHirIna:corrwatchman}
{\sc X.-H.~Tan, T.~Hirata, Y.~Inagaki}.
\newblock Corrigendum to ``An incremental Algorithm for constructing shortest
  watchman routes''.
\newblock {\em Int.~J.~Comput.~Geom.~Appl.} 9(3):319--324,~1999.

\end{thebibliography}
\end{document}